\setlist[]{topsep=2pt,partopsep=2pt,parsep=2pt,itemsep=2pt}
\DeclarePairedDelimiter\ceil{\lceil}{\rceil}
\newtheoremstyle{plain-boldhead}
  {\topsep}
  {\topsep}
  {\itshape}
  {}
  {\bfseries}
  {.}
  { }
  {\thmname{#1}\thmnumber{ #2}\thmnote{ (\bfseries #3)}}
\newtheoremstyle{definition-boldhead}
  {\topsep}
  {\topsep}
  {\normalfont}
  {}
  {\bfseries}
  {.}
  { }
  {\thmname{#1}\thmnumber{ #2}\thmnote{ (\bfseries #3)}}
\theoremstyle{plain-boldhead}
\newtheorem{theorem}{Theorem}
\newtheorem{corollary}[theorem]{Corollary}
\theoremstyle{definition-boldhead}
\newcommand{\str}[1]{\textsc{#1}}
\newcommand{\var}[1]{\textit{#1}}
\newcommand{\op}[1]{\textsl{#1}}
\newcommand{\msg}[2]{\ensuremath{\ifempty{#2} [\str{#1}] \else [\str{#1}, {#2}] \fi}}
\newcommand{\false}{\textsc{false}\xspace}
\newcommand{\true}{\textsc{true}\xspace}
\newcommand{\comment}[1]{}
\newcommand{\etal}{\emph{et al.}\xspace}
\def \ifempty#1{\def\temp{#1} \ifx\temp\empty }
\DeclareMathOperator*{\argmax}{arg\,max}
\newcommand\phase{\var{phase}\xspace}
\newcommand\ledger{\var{L}\xspace}
\newcommand\prevledger{\var{prevLedger}\xspace}
\newcommand\prevledgerID{\var{prevLedgerID}\xspace}
\newcommand\genesisledger{\var{genesisLedger}\xspace}
\newcommand\fullyledger{\var{validLedger}\xspace}
\newcommand\prevRoundTime{\var{prevRoundTime}\xspace}
\newcommand\roundTime{\var{roundTime}\xspace}
\newcommand\converge{\var{converge}\xspace}
\newcommand\open{\var{open}\xspace}
\newcommand\establish{\var{establish}\xspace}
\newcommand\accepted{\var{accepted}\xspace}
\newcommand\heartbeat{\var{heartbeat}\xspace}
\newcommand\agree{\var{agree}\xspace}
\newcommand\disagree{\var{disagree}\xspace}
\newcommand\iter{\var{i}\xspace}
\newcommand\threshold{\var{threshold}\xspace}
\newcommand\val{\var{$\sigma$}\xspace}
\newcommand\valCount{\var{valCount}\xspace}
\newcommand\disputes{\var{disputes}\xspace}
\newcommand\newvote{\var{newVote}\xspace}
\newcommand\pos{\var{prop}\xspace}
\newcommand\currPeerPositions{\var{currPeerProposals}\xspace}
\newcommand\validations{\var{validations}\xspace}
\newcommand\dt{\var{dt}\xspace}
\newcommand\result{\var{result}\xspace}
\newcommand\txns{\var{txns}\xspace}
\newcommand\position{\var{txns}\xspace}
\newcommand\proposal{\var{proposal}\xspace}
\newcommand\seq{\var{seq}\xspace}
\newcommand\nodeID{\var{node}\xspace}
\newcommand\tx{\var{tx}\xspace}
\newcommand\ourVote{\var{ourVote}\xspace}
\newcommand\node{\var{node}\xspace}
\newcommand\votes{\var{votes}\xspace}
\newcommand\set{\var{set}\xspace}
\newcommand\yays{\var{yays}\xspace}
\newcommand\nays{\var{nays}\xspace}
\newcommand\unl{\var{UNL}\xspace}
\newcommand\openTime{\var{openTime}\xspace}
\newcommand\id{\var{ID}\xspace}
\newcommand\parentid{\var{parentID}\xspace}
\newcommand\iterj{\var{j}\xspace}
\newcommand\s{\var{S}\xspace}
\newcommand\ol{\var{$\omega$}\xspace}
\newcommand\emptyhashmap{[\,]}
\newcommand\proposaltime{\var{time}\xspace}
\newcommand\support{\var{support}\xspace}
\newcommand\uncommitted{\var{uncommitted}\xspace}
\newcommand\tree{\var{tree}\xspace}
\begin{document}

\title{\bf Security Analysis of Ripple Consensus}

\author{Ignacio Amores-Sesar$^1$\\
  University of Bern\\
  \url{ignacio.amores@inf.unibe.ch}
  \and Christian Cachin$^1$\\
University of Bern\\
  \url{cachin@inf.unibe.ch}
  \and Jovana Mićić$^1$\\
University of Bern\\
  \url{jovana.micic@inf.unibe.ch}
}

\date{}

\footnotetext[1]{Institute of Computer Science, University of Bern,
  Neubr\"{u}ckstrasse 10, 3012 CH-Bern, Switzerland.}

\maketitle

\begin{abstract}\noindent
  The Ripple network is one of the most prominent blockchain platforms and
  its native XRP token currently has one of the highest cryptocurrency
  market capitalizations.  The \emph{Ripple consensus protocol} powers this
  network and is generally considered to a Byzantine fault-tolerant
  agreement protocol, which can reach consensus in the presence of faulty
  or malicious nodes.  In contrast to traditional Byzantine agreement
  protocols, there is no global knowledge of all participating nodes in
  Ripple consensus; instead, each node declares a list of other nodes that
  it trusts and from which it considers votes.

  Previous work has brought up concerns about the liveness and safety of
  the consensus protocol under the general assumptions stated initially by
  Ripple, and there is currently no appropriate understanding of its
  workings and its properties in the literature.  This paper closes this
  gap and makes two contributions.  It first provides a detailed, abstract
  description of the protocol, which has been derived from the source code.
  Second, the paper points out that the abstract protocol may violate
  safety and liveness in several simple executions under relatively benign
  network assumptions.
\end{abstract}

\section{Introduction}
\label{sec:intro}

Ripple is one of the oldest and most established blockchain networks; its
XRP token is ranked fourth in market capitalization in October 2020.  The Ripple
network is primarily aimed at fast global payments, asset exchange, and
settlement.  Its distributed consensus protocol is implemented by a
peer-to-peer network of validator nodes that maintain a history of all
transactions on the network~\cite{rippleledgeroverview20}.  Unlike
Nakamoto's consensus protocol~\cite{bitcoin} in Bitcoin or Ethereum, the
Ripple consensus protocol does not rely on ``mining,'' but uses a voting
process based on the identities of its validator nodes to reach
consensus.  This makes Ripple much more efficient than Bitcoin for
processing transactions (up to 1500 transactions per second) and lets it
achieve very low transaction settlement times (4--5 seconds).

\begin{figure}
  \centering \includegraphics[width=0.6\textwidth]{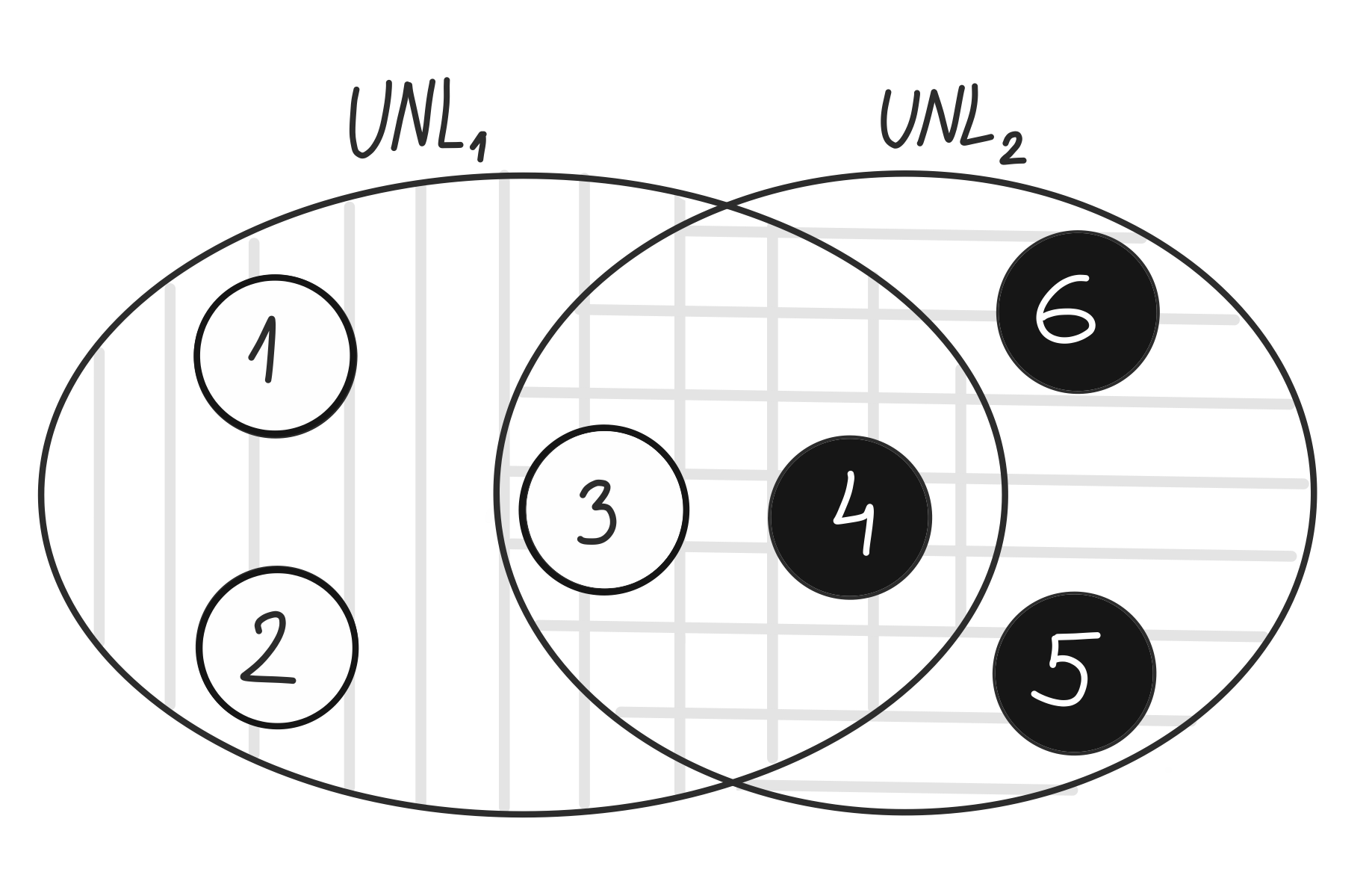}
  \caption{Example of a Ripple network configuration with six nodes and two
    UNLs, $\unl_1 = \{1, 2, 3, 4\}$  and
    $\unl_2 = \{3, 4, 5, 6\}$.
    Nodes 1, 2, and 3 (white) trust $\unl_1$, and nodes 4, 5, and 6
    (black) trust $\unl_2$.
    Notice that nodes 3 and 4 have more influence than the
    rest of nodes since they are in the intersection of both UNLs.
  }
  \label{fig:unls}
\end{figure}

However, Ripple's consensus protocol does not follow the established models
and algorithms for \emph{Byzantine agreement}~\cite{peshla80,lashpe82} or
\emph{Byzantine fault-tolerant (BFT) consensus}~\cite{caslis02}.  Those
systems start from a common set of nodes that are communicating with each
other to reach consensus and the corresponding protocols have been
investigated for decades.  Instead, the Ripple consensus protocol
introduces the idea of subjective validators, such that every node declares
some \emph{trusted validators} and effectively communicates only with those
nodes for reaching agreement on transactions.  With this mechanism, the
designers of Ripple aimed at opening up membership in the set of validator
nodes compared to BFT consensus.  The trusted validators of a node are
defined by a \emph{Unique Node List (UNL)}, which plays an important role
in the formalization of the protocol.  Every node maintains a static UNL in
its configuration file and considers only the opinions of nodes in its UNL
during consensus.  Figure~\ref{fig:unls} shows an example
network, where two UNLs are defined: $\unl_1 = \{1, 2, 3, 4\}$ and
$\unl_2 = \{3,4,5,6\}$; for instance, nodes 1, 2 and 3 may trust $\unl_1$,
and nodes 4, 5 and 6 may trust~$\unl_2$.

Consensus in Ripple aims at delivering the transactions
submitted by clients to all participating nodes in a common global order,
despite faulty or malicious (Byzantine) nodes~\cite{schwartz2014ripple}.
This ensures that the sequence of transactions, which are grouped into
so-called \emph{ledgers} and then processed by each node, is the same for
all nodes.  Hence, the states of all correct nodes remain synchronized,
according to the blueprint of state-machine replication~\cite{schnei90}.

Cachin and Vukoli\'{c}~\cite{cacvuk17} have earlier pointed out
that it is important to formally assess the properties of blockchain consensus
protocols.  Unfortunately, many systems have been designed and were
deployed without following the agreed-on principles on protocol analysis from
the literature.  Ripple is no exception to this, as we
show in this work.

Specifically, we focus on two properties that every sound protocol must
satisfy \cite{alpern1985defining}: \emph{safety} and \emph{liveness}.
Safety means that nothing ``bad'' will ever happen, and liveness means that
something ``good'' eventually happens.  Safety ensures that the network
does not fork or double-spend a token, for instance.  A violation of
liveness would mean that the network stops making progress and halts
processing transactions, which creates as much harm as forking.

This work first presents a complete, abstract description of the Ripple
consensus protocol (Section~\ref{sec:analysis}).  The model has been
obtained directly from the source code.  It is formulated in the language
spoken by designers of consensus protocols, in order to facilitate a better
understanding of the properties of Ripple consensus.  No formal description
of Ripple consensus with comparable technical depth has been available so
far (apart from the source itself).

Second, we exhibit examples of how safety and liveness may be violated in
executions of the Ripple consensus protocol (Sections~\ref{sec:safety} and
\ref{sec:liveness}).  In particular, the \emph{network may fork} under the
standard condition on UNL overlap stated by Ripple and in the presence of a
constant \emph{fraction} of Byzantine nodes.  The malicious nodes may simply send conflicting messages
to correct nodes and delay the reception of other messages among correct
nodes.  Furthermore, the consensus protocol may \emph{lose liveness} even if
all nodes have the same UNL and there is only \emph{one} Byzantine node.
If this would occur, the system has to be restarted manually.

Given these findings, we conclude that the consensus protocol of the Ripple
network is brittle and does not ensure consensus in the usual sense.
It relies heavily on synchronized clocks, timely
message delivery, the presence of a fault-free network, and an a-priori
agreement on common trusted nodes.  The role of the UNLs, their overlap,
and the creation of global consensus from subjective trust choices
remain unclear.  If Ripple instead had adopted a standard BFT
consensus protocol~\cite{CachinGR11}, as done by Tendermint~\cite{bukwmi18}, versions of
Hyperledger Fabric~\cite{abbccd18}, Libra~\cite{libra20librabft} or Concord~\cite{sbft}, then
the Ripple network would resist a much wider range of corruptions, tolerate
temporary loss of connectivity, and continue operating despite loss of
synchronization.

\section{Related work}
\label{sec:related}

Despite Ripple's prominence and its relatively high age among blockchain
protocols~--- the system was first released in 2012~--- there are only few
research papers investigating the Ripple consensus protocol compared to the
large number of papers on Bitcoin.  The original Ripple white paper of
2014~\cite{schwartz2014ripple} describes the UNL model and illustrates some
ideas behind the protocol.  It claims that under the assumption of
requiring an 80\%-quorum for declaring consensus, the intersection between
the UNLs of any two nodes $u$ and $v$ should be larger than 20\% of the
size of the larger of their UNLs, i.e.,
\[
  |\unl_u \cap \unl_v| \ \geq \ \frac{1}{5}\max\{|\unl_u|,|\unl_v|\}.
\]

The only earlier protocol analysis in the scientific literature of which we
are aware was authored by Arm\-knecht \etal in
2015~\cite{armknecht2015ripple}.  This work analyzes the Ripple consensus
protocol and outlines the security and privacy of the network compared to
Bitcoin.  The authors prove that a 20\%-overlap, as claimed in the white
paper, cannot be sufficient for reaching consensus and they increase the
bound on the overlap to at least 40\%, i.e.,
\[
  |\unl_u \cap \unl_v| \ > \ \frac{2}{5}\max\{|\unl_u|,|\unl_v|\}
\]

In a preprint of 2018, Chase and MacBrough~\cite{chase2018analysis} further
strengthen the required UNL overlap.  They introduce a high-level model of
the consensus protocol and describe some of its properties, but many
details appear unclear or are left out.  This work concludes that the
overlap between UNLs should actually be larger than 90\%.  The paper also
gives an example with 102 nodes that shows how liveness can be violated,
even if the UNLs overlap almost completely (by 99\%) and there are no
faulty nodes.  The authors conclude that manual intervention would be
needed to resurrect the protocol after this.

An analysis whose goal is similar to that of our work has been conducted by Mauri \etal~\cite{mauri2020formal}. Based on the source code, they give a verbal description of the consensus protocol, but do not analyze dynamic protocol properties. Our analysis, in contrast, provides a detailed, formal description with pseudocode and achieves a much better understanding of how the ``preferred ledger'' is chosen. Moreover, our work shows possible violations of safety and liveness, whereas Mauri \etal address only on the safety of the consensus protocol through sufficient conditions.

Other academic work mostly addresses network structure, transaction graph,
and privacy aspects of payments on the Ripple
blockchain~\cite{lumest17,mmskf18}, which is orthogonal to our focus.

\section{A description of the Ripple consensus protocol}
\label{sec:analysis}

The main part of our analysis consists of a detailed presentation of the
Ripple consensus protocol in this section and formally in
Algorithms~\ref{alg:alg1}--\ref{alg:alg4}.  Before we describe this, we
define the task that the protocol intends to solve.

\subsection{Specification}
\label{sec:specification}

Informally, the goal of the Ripple consensus protocol is ``to ensure that
the same transactions are processed and validated ledgers are consistent
across the peer-to-peer XRP Ledger network''~\cite{rippledoc20}.  More
precisely, this protocol implements the task of synchronizing the nodes so
that they proceed through a common execution, by appending successive
\emph{ledgers} to an initially empty history and where each ledger consists
of a number of \emph{transactions}.  This is the problem of replicating a
service in a distributed system, which goes back to Lamport et al.'s
pioneering work on Byzantine agreement~\cite{peshla80,lashpe82}.  The
problem has a long history and a good summary can be found in the book ``30-year
perspective on replication''~\cite{CharronBostPS10}.

For replicating an abstract service among a set of nodes, the service is
formulated as a deterministic state machine that executes
\emph{transactions} submitted by clients or, for simplicity, by the nodes
themselves.  The \emph{consensus protocol} disseminates the transactions
among the nodes, such that each node locally executes the \emph{same
  sequence of transactions} on its copy of the state.  The task provided by
this protocol is also called \emph{atomic broadcast}, indicating that
the nodes actually disseminate the transactions.  When each node locally
executes the same sequence of transactions, as directed by the protocol,
and since each transaction is deterministic, all
nodes will maintain the same copy of the state~\cite{schnei90}.

More formally, \emph{atomic broadcast} is characterized by two events
dealing with transactions: \op{submission} and \op{execution}, which may
each occur multiple times.  Every node may submit a transaction $\tx$ by
invoking $\op{submit}(\tx)$ and atomic broadcast applies $\tx$ to the
application state on the node through $\op{execute}(\tx)$.  A protocol for
atomic broadcast then ensures these properties~\cite{hadtou93,CachinGR11}:

\begin{description}
\item[Validity:] If a correct node~$p$ submits a transaction~$\tx$, then $p$
  eventually executes~$\tx$.
\item[Agreement:] If a transaction~$\tx$ is executed by some correct node,
  then $\tx$ is eventually executed by every correct node.
\item[Integrity:] No correct node executes a transaction more than once;
  moreover, if a correct node executes a transaction~$\tx$ and the
  submitter~$p$ of $\tx$ is correct, then $\tx$ was previously submitted
  by~$p$.
\item[Total order:] For transactions $\tx$ and $\tx'$, suppose $p$ and $q$
  are two correct nodes that both execute $\tx$ and $\tx'$.  Then $p$
  executes $\tx$ before $\tx'$ if and only if $q$ executes $\tx$ before
  $\tx'$.
\end{description}

Our specification does not refer to the heterogeneous trust structure
defined by the UNLs and simply assumes all nodes should execute the same
transactions.  This corresponds to the implicit assumption in Ripple's code
and documentation.  We note that the question of establishing global
consistency in a distributed system with subjective trust structures is a
topic of current research, as addressed by asymmetric quorum
systems~\cite{cactac19} or in the context of Stellar's
protocol~\cite{logama19}, for example.

\subsection{Overview}
\label{sec:overview}

The following description was obtained directly from the source code.  Its
overall structure retains many elements and function names found in the
code, so that it may serve as a guide to the source for others and to
explain its working.  If the goal had been to compare Ripple consensus to
the existing literature on synchronous Byzantine agreement protocols, the
formalization would differ considerably.

The protocol is highly \emph{synchronous} and relies on a common notion of
time.  It is structured into successive \emph{rounds of consensus}, whereby
each round agrees on a \emph{ledger} (a set of transactions to execute).
Each round roughly takes a predefined amount of time and is driven by a
heartbeat timer, which triggers a state update once per second.  This
contrasts with the Byzantine consensus protocols with partial
synchrony~\cite{dwlyst88}, such as PBFT~\cite{caslis02}, which can tolerate
arbitrarily long periods of asynchrony and rely on clocks or timeouts only
for liveness.  The Ripple protocol aims to agree on a transaction set
within each synchronized round.  The round ends when all nodes collectively
declare to have reached consensus on a \emph{proposal} for the round.  The
protocol is then said to \emph{close} and later \emph{validate} a ledger
containing the agreed-on transaction set.  However, the transactions in the
ledger are executed only after another protocol step, once the ledger has
become \emph{fully validated}; this occurs in an asynchronous process in
the background.  Transaction execution is only logically synchronized with
the consensus round.

A \emph{ledger} consists of a batch of transactions that result from a
consensus round and contains a hash of the logically preceding ledger.
Ledgers are stored persistently and roughly play the role of blocks in
other blockchain protocols.  Each node locally maintains three different
ledgers: the \emph{current ledger}, which is in the process of building
during a consensus round, the \emph{previous ledger}, representing the most
recently closed ledger and the \emph{valid ledger}, which is the last fully
validated ledger in the network.

In more detail, a consensus round has three \emph{phases}: \open,
\establish, and \accepted.  According to the the state diagram shown in
Figure~\ref{fig:phases}, the usual phase transition goes from \open to
\establish to \accepted and then proceeds to the next consensus round,
which starts again from \open.  However, it is also possible that the phase
changes from \establish to \open, if a node detects that it has been forked
from the others to a wrong ledger and resumes processing after
switching to the ledger agreed by the network.

\begin{figure}
  \centering
  \includegraphics[width=0.7\textwidth]{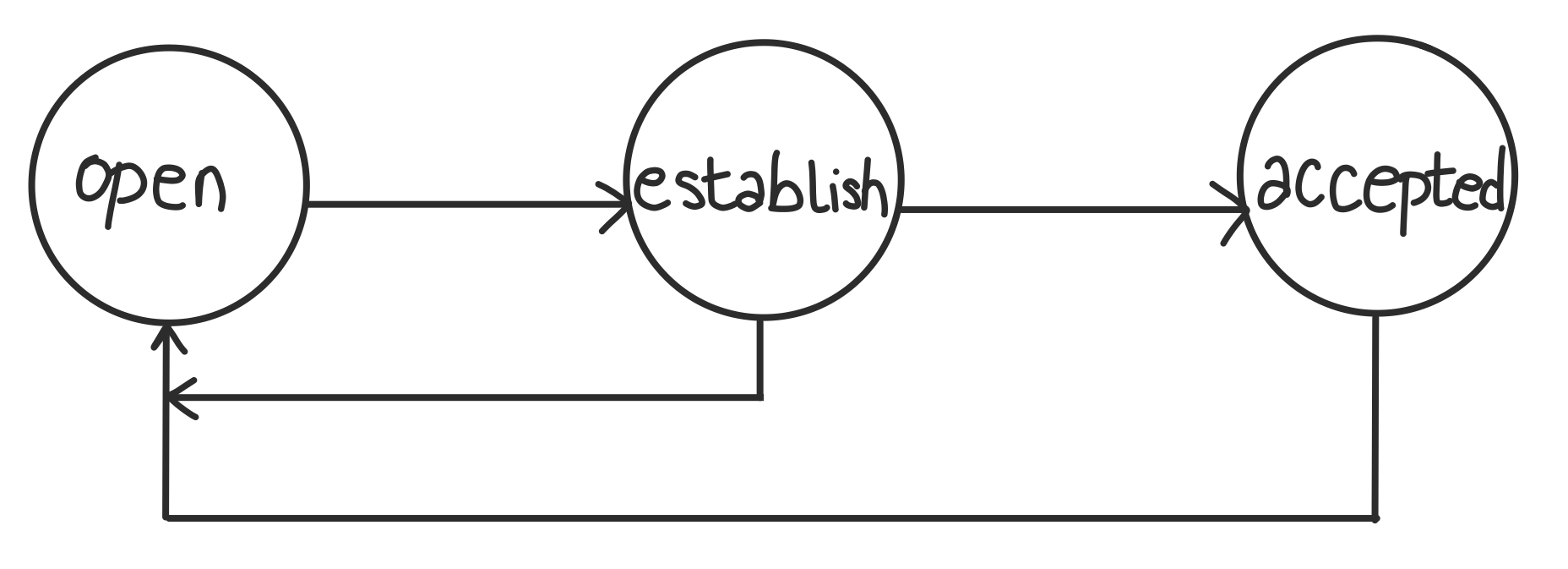}
  \caption{Phases and transitions between phases of a consensus round.}
  \label{fig:phases}
\end{figure}

Nodes may submit transactions at any time, concurrently to executing the
consensus rounds.  They are disseminated among the nodes through a
\emph{gossip layer} that ensures only weak consistency.  All transactions
that have been received from gossip are placed into a buffer.
Apparently, the original design assumed that the gossip layer ensures a
notion of consistency that prevents Byzantine nodes from equivocating, in
the sense of correct nodes never receive different messages from them.
This assumption has been dropped later~\cite{chase2018analysis}.

The protocol rounds and their phases are implemented by a state machine,
which is invoked every second, when the global \heartbeat timer ticks.
Messages from other nodes are received asynchronously in the background and
processed during the next timer interrupt.

The timeout handler (L\ref{heartbeat}) first checks if the local
\emph{previous ledger} is the same as the \emph{preferred ledger} of a
sufficient majority of the nodes in the network.  If not, the node has been
forked or lost synchronization with the rest of the network and must bring
itself back to the state agreed by the network.  In this case, it starts a
new consensus round from scratch.

When the node enters a new round of consensus, it sets the phase to \open,
resets round-specific data structures, and simply waits for the buffer to
fill up with submitted transactions.  Once the node has been in the \open
phase for more than half of the duration of the previous consensus round,
the node moves to the \establish phase
(L\ref{phaseestablish}--L\ref{closeledger}; function \op{closeLedger}).  It
locally closes the ledger, which means to initialize its proposal for the
consensus round and to send this to the other nodes in its UNL.

During the \establish phase, the nodes exchange their proposals for the
transactions to decide in this consensus round (using \str{proposal}
messages).  Obviously, these proposals may contain different transaction
sets.  All transactions on which the proposals from other nodes differ
become \emph{disputed}.  Every node keeps track of how many other nodes in
its UNL have proposed a disputed transaction and represents this
information as \emph{votes} by the other nodes.  The node may remove a
disputed transaction from its own proposal, or add one to its proposal,
based on the votes of the others and based on the time that has passed.
Specifically, the node increases the necessary threshold of votes for
changing its own vote on a disputed transaction depending on the duration
of the \establish phase with respect to the time taken by the previous
consensus round.

The node leaves the \establish phase when it has found that there is a
consensus on its proposal (L\ref{loopend}--L\ref{onaccept}; functions
\op{haveConsensus} and \op{onAccept}).  The node constructs the next ledger
(the ``last closed ledger'') by ``applying'' the decided transactions.
This ledger is signed and broadcast to the other nodes in a
\str{validation} message.

The node then moves to the \accepted phase and immediately initializes a
new consensus round.  Concurrently, the node receives \str{validation}
messages from the nodes in its UNL.  It verifies them and counts how many
other nodes in its UNL have issued the same validation.  When this number
reaches 80\% of the nodes in its UNL, the ledger becomes fully validated
and the node executes the transactions contained in it.

\subsection{Details}
\paragraph{Phase open.}
Function \op{beginConsensus} starts a consensus round for the next ledger
(L\ref{beginConsensusFunction}). Each ledger (L\ref{typeledger}) contains a
hash (\id) that serves as its identifier, a sequence number (\seq), a hash
of the parent ledger (\parentid), and a transaction set (\txns), denoting
the transactions applied by the ledger.

The node records the time when the \open phase started (\openTime,
L\ref{resetopentime}), so that it can later calculate how long the \open
phase has taken. This is important because the duration of the \open phase
determines when to close the ledger locally. If the time that has passed
since \openTime is longer or equal to half of the previous round time
(\prevRoundTime), consensus moves to phase \establish by calling the
function \op{closeLedger} (L\ref{closeledger}).  Meanwhile all nodes submit
transactions with the gossip layer (L\ref{submit}) and each node stores the
transaction received via gossip messages in its transaction set~\s
(L\ref{recievesubmit}).  We model transactions as bit strings.  In some
places, and as in the source code, we use a short, unique \emph{transaction
  identifier} (of type int) for each transaction $\tx \in \{0,1\}^*$,
computed by a function $\op{TxID}(\tx)$. A transaction set is a set of
binary strings here, but the source code maintains a transaction set using
a hash map, containing the transaction data indexed by their identifiers.


\paragraph{Phase establish.}
When the node moves from \open to \establish, it calls \op{closeLedger}
that creates an initial proposal (stored in \result.\proposal), containing
all transactions received from the gossip layer (L\ref{createproposal})
that have not been executed yet.  A proposal structure
(L\ref{typeproposal}) contains the hash of the previous ledger
(\prevledgerID), a sequence number (\seq), the actual set (\txns) of
proposed transactions (in the source code named \textit{position}), an
identifier of the node (\node) that created this proposal, and a timestamp
(\proposaltime) when this proposal is created (L\ref{createproposal}).

The node then broadcasts the new proposal as a \str{proposal} message
(L\ref{broadcastproposal1}) to all nodes in its UNL.  When they receive it,
they will store its contents in their \currPeerPositions collection of
proposals (L\ref{receiveprop1}), if the message originates from a node
their respective UNL.  The \op{closeLedger} function also sets
\result.\roundTime to the current time (L\ref{roundtimenow}).  This serves
to measure the duration of the \establish phase and will be used later to
determine how far the consensus process has converged.

Based on the proposals from other nodes, each node computes a set of
disputed transactions (L\ref{createDisputesFunction}).  A disputed
transaction (DisputedTx, L\ref{typedisputedtx}) contains the transaction
itself (\tx), a binary vote (\ourVote) by the node on whether this
transaction should be included in the ledger, the number of ``yes'' and
``no'' votes from other nodes on the transaction (\yays and \nays), taken
from their \str{proposal} messages, and the list of votes on this
transaction from the other nodes (\votes).

A transaction becomes disputed when it is proposed by the node itself and
some other node does not propose it, or vice versa.  The node determines
these by comparing its own transaction set with the transaction sets of all
other nodes~(L\ref{creatediff}).  Every disputed transaction is recorded
(as a DisputedTx structure) in the collection \result.\disputes
(L\ref{createdt}--L\ref{cdend}).

During the \establish phase, the node constantly updates its votes on all
disputed transactions (L\ref{callupdateourproposals};
L\ref{updateourproposals}; L\ref{updateVoteFunction}) for responding to
further \str{proposal} messages that have been received.  A vote may change
based on the number of nodes in favor of the transaction, the
\emph{convergence ratio} (\converge) and a \threshold.  Convergence
measures the expected progress in one single consensus round and is
computed from the duration of the \establish phase, the duration of the
previous round, and an assumed maximal consensus-round time
(L\ref{convergecalc}). The value for the threshold is predefined. The
further the consensus converges, the higher is the threshold that the
number of opposing votes needs to reach so that the node changes its own
vote~(L\ref{l:newvote}).  Whenever the node's proposal is updated, the node
broadcasts its new proposal to the other nodes
(L\ref{broadcastnewproposal}) and the disputed transactions are
recomputed~(L\ref{updateDisputes}).

Afterwards, the node checks if consensus on its proposed transaction set
\result.\txns is reached, by calling the function \op{haveConsensus}
(L\ref{loopend}). The node counts agreements (L\ref{l:agree}) and
disagreements (L\ref{l:disagree}) with \result.\txns.  If the fraction of
agreeing nodes is at least 80\% with respect to the UNL
(L\ref{checkhaveconsensus}), then consensus is reached.  The node proceeds
to the \accepted phase by calling the function
\op{onAccept}~(L\ref{onaccept}).


\paragraph{Phase accepted.}

The function \op{onAccept} (L\ref{onacceptfunction}) ``applies'' the
agreed-on transaction set and thereby creates the next ledger (called the
``last closed ledger'' in the source code; L\ref{l:buildlcl}). This ledger
is then signed (L\ref{l:sign}) and broadcast to the other nodes as a
\str{validation} message (L\ref{broadcastvalidation}). This marks the end
of the \accepted phase and a new consensus round is initiated by the node
(L\ref{l:startconsensusagain}).

Meanwhile, in the background, the node receives \str{validation} messages
from other nodes in its UNL and tries to verify them (L\ref{receiveval}).
This verification checks the signature and if the sequence number of the
received ledger is the same as the sequence number of the own ledger.  All
validations that satisfy both conditions and contain the node's own
agreed-on ledger are counted (L\ref{l:valcount}); this comparison uses the
cryptographic hash of the ledger structure in the source code.  Again, if
80\% of nodes have validated the same ledger and if the sequence number of
that ledger is larger than that of the last fully validated
ledger~(L\ref{l:checkquorum}), the ledger becomes fully validated
(L\ref{l:fullyledger}).  The node then executes the transactions in the
ledger~(L\ref{executetxns}).  In other words, the consensus decision has
become final.


\paragraph{Preferred ledger.}

A node participating in consensus regularly computes the \emph{preferred
  ledger}, which denotes the current ledger on which the network has
decided.  Due to possible faults and network delays, the node's \prevledger
may have diverged from the preferred ledger, which is determined by calling
the function \op{getPreferred(\var{validLedger})}
(L\ref{getPreferredFunction}).  Should the network have adopted a different
ledger than the \prevledger of the node, the node switches to this ledger
and restarts the consensus round with the new ledger.

Notice that the validated ledgers from all correct nodes form a tree,
rooted in the initial ledger (\genesisledger).  Each node stores all valid
ledgers that it receives in a tree-structured variable \tree.  Whenever the
node receives a \str{validation} message containing a ledger $\ledger'$, it
adds $\ledger'$ to \tree (L\ref{addtotree}).  In order to compute the
preferred ledger, we define the following functions, which are derived from
the ledgers in \tree and in the received \str{validation} messages:
\begin{itemize}
\item $\op{tip-support}(L)$ for a ledger $L$ is the number of validators in
  the \unl that have validated~$L$. In other words,
  \[
    \op{tip-support}(\ledger) = \big|\{\ j\in\unl\ |\
    \validations[j]=\ledger\}\big|.
  \]
\item $\op{support}(L)$ for a given ledger~\ledger is the sum of the tip
  support of $L$ and all its descendants in \tree, i.e.,
  \[
    \op{support}(L) = \op{tip-support}(L) +
      \sum_{L'\ \text{is a child of}\ L \ \text{in}\ \tree}\op{tip-support}(L^\prime).
  \]
\item $\uncommitted(L)$ for a ledger~\ledger denotes the number of
  validators whose last validated ledger has a sequence number that is
  strictly smaller than the sequence number of \ledger.  More formally,
  \[
    \uncommitted(L) =
    \big| \{j\in\unl\;|\; \validations[j].\seq < \ledger.\seq\} \big|.
  \]
\end{itemize}

With these definitions, we now explain how \op{getPreferred(Ledger
  \ledger)} proceeds (L\ref{getPreferredFunction}--L\ref{getPreferredEnd}).
If $L$ has no children in \tree, it returns $L$ itself.  Otherwise, the
function considers the child of $L$ that has the highest support among all
children ($M$).  If the support of $M$ is still smaller than the number of
validators that are yet uncommitted at this ledger-sequence number,
then $L$ is still the preferred ledger~(L\ref{preferredL1}).
Otherwise, if the support of $M$ is guaranteed to exceed the support of any
of its siblings~$N$, even when the uncommitted validators would also
support~$N$, then the function recursively calls $\op{getPreferred}$ on~$M$, which outputs the preferred
ledger for~$M$ and returns this as the preferred ledger for~\ledger. Otherwise, $L$
itself is returned as the preferred ledger. Observe that in the case when $M$ has no siblings conditions in L\ref{cond1} and L\ref{cond2} are equivalent. Then is enough to check if support of $M$ is greater than uncommitted od $M$.

\paragraph{Functions.} For the simplicity of pseudocode, there are some functions that are not fully explained. These functions are:
\begin{itemize}
  \item \op{startTimer}(\var{timer}, \var{duration}) starts \var{timer}, which expires after the time passed as \var{duration}.
  \item \op{clock.now()} returns the current time.
  \item \op{Hash} creates a unique identifier (often denoted \op{ID}) of a data structure by converting the data to a canonical representation and applying a cryptographic hash function to this.
  \item $\mathcal{A} \; \triangle\; \mathcal{B}$ denotes the symmetric set difference.
  \item $\op{boolToInt}(b)$ converts a logical value~$b$ to an integer
  and returns \op{$b$? 0:1}.
  \item \op{sign\textsubscript{\iter}(\ledger)} creates a cryptographic digital signature for ledger~\ledger by node~\iter.
  \item \op{verify\textsubscript{\iter}}\((\ledger,\val)\) checks if the digital signature on \ledger from node~\iter is valid.
  \item \op{siblings(M)} returns the set of nodes, different from $M$, that have the same parent as $M$. 
\end{itemize}

\begin{algo*}
  \vbox{
  \small
  \begin{numbertabbing}\reset
  xxxx\=xxxx\=xxxx\=xxxx\=xxxx\=xxxx\=xxxx\kill
  \textbf{Type} \label{} \\
  \> Enum $\text{Phase} = \{\open, \establish, \accepted\}$ \label{} \\
  \> $\text{Tx} = {\{0,1\}^*}$ \` // a transaction \label{} \\
  \> $\text{TxSet} = 2^{\text{Tx}}$ \label{} \\
  \> DisputedTx( \` // DisputedTx.h:50 \label{typedisputedtx} \\
  \>\> Tx \tx, \` // disputed transaction \label{} \\
  \>\> \text{bool} $\ourVote$, \` // binary vote on whether transaction should be included \label{} \\
  \>\> int \yays, \` // number of yes votes from others \label{} \\
  \>\> int \nays, \` // number of no votes from others \label{} \\
  \>\> HashMap[int $\to$ bool] \votes) \` // collection of votes indexed by node \label{} \\
  \> Ledger( \` // Ledger.h:77 \label{typeledger} \\
  \>\> Hash \id,  \` // identifier \label{} \\
  \>\> int \seq, \` // sequence number of this ledger  \label{} \\
  \>\> Hash \parentid, \` // identifier of ledger's parent  \label{} \\
  \>\> TxSet \txns) \` // set of transactions applied by ledger \label{} \\
  \> Proposal( \` // ConsensusProposal.h:52 \label{typeproposal} \\
  \>\> Hash \prevledgerID, \` // hash of the previous ledger, on which this proposal builds \label{} \\
  \>\> int \seq, \` // sequence number \label{} \\
  \>\> TxSet \position, \` // proposed transaction set, called \var{position} at ConsensusProposal.h:73 \label{} \\
  \>\> int \nodeID, \` // node that proposes this \label{} \\
  \>\> milliseconds \proposaltime) \` // time when proposal is created \label{} \\
  \> ConsensusResult( \` // ConsensusTypes.h:201 \label{} \\
  \>\> TxSet \txns, \` // set of transactions consensus agrees on \label{} \\
  \>\> Proposal \proposal, \` // proposal containing transaction set \label{} \\
  \>\> HashMap[int $\to$ DisputedTx] \disputes, \` // collection of disputed transactions \label{} \\
  \>\> milliseconds \roundTime) \` // duration of the establish phase \label{} \\
  \\  
  \textbf{State} \label{} \\
  \> Phase \(\phase\)\` // phase of the consensus round for agreeing on one ledger \label{} \\
  \> Tree \tree \` // tree representation of received valid ledgers\label{} \\
  \> Ledger \(\ledger\)\` // current working ledger \label{} \\
  \> Ledger \(\prevledger\)\` // last agreed-on (``closed'') ledger according to the network \label{} \\
  \> Ledger \(\fullyledger\)\` // ledger that was most recently fully validated by the node \label{} \\
  \> TxSet \(\s\)\` // transactions submitted by clients that have not yet been executed \label{} \\
  \> ConsensusResult \(\result\)\` // data relevant for the outcome of consensus on a single ledger \label{} \\
  \> HashMap[int $\to$ Proposal] \(\currPeerPositions\)\` // collection of proposals indexed by node \label{} \\
  \> HashMap[int $\to$ Ledger] \(\validations \)\` // collection of validations indexed by node \label{} \\
  \> milliseconds \(\prevRoundTime\)\` // time taken by the  previous consensus round, initialized to 15s \label{} \\ 
  \> float \(\converge\) $\in \left[0, 1 \right]$\` // ratio of round time to \prevRoundTime \label{} \\
  \> \(\unl \subseteq \{1, \dots, M\}\) \` // validator nodes trusted by node $i$, taken from the configuration file \label{} \\
  \> milliseconds \(\openTime\)\` // time when the last \open phase started \label{} \\

  \\
  \textbf{function} \op{initialization()} \label{} \\
  \> \(\prevledger \gets \genesisledger\) \` // \genesisledger is the first ledger in the history of the network \label{} \\
  \> $S \gets \{\}$ \label{} \\
  \> \op{beginConsensus()} \` // start the first round of consensus \label{} \\
  \> $\op{startTimer}(\heartbeat, 1 s)$ \` // NetworkOPs.cpp:673 \label{initheartbeat} \\
  \\
  \textbf{upon} submission of a transaction \tx \textbf{do} \label{submit} \\
  \> send message \(\msg{submit}{\tx}\) with the gossip layer  \label{} \\
  \\
  \textbf{upon} receiving a message \(\msg{submit}{\tx}\) from the gossip layer \textbf{do} \label{recievesubmit} \\
  \> \( \s \gets \s \cup \{\tx\} \) \label{} 
  \end{numbertabbing}
  }
  \caption{{\bf Ripple consensus protocol for node \iter (continues on next pages)}}
  \label{alg:alg1} 
\end{algo*}

\begin{algo*}
  \vbox{
  \small
  \begin{numbertabbing}
  xxxx\=xxxx\=xxxx\=xxxx\=xxxx\=xxxx\=xxxx\kill
  \textbf{function} \op{beginConsensus()} \` // start a new round of consensus, Consensus.h:663 \label{beginConsensusFunction} \\
  \> \(\phase \gets \open\) \` // Consensus.h:669\label{phaseopen} \\
  \> \(\result \gets (\{\}, \bot, \emptyhashmap, 0)\) \` // Consensus.h:674 \label{} \\
  \> \(\converge \gets 0\) \` // Consensus.h:675 \label{} \\
  \> \(\openTime \gets \op{clock.now()}\) \` // remember the time when this consensus round started \label{resetopentime} \\
  \> \(\currPeerPositions \gets [\,]\) \` // reset the proposals for this consensus round \label{emptylist} \\ 
  \\
  \textbf{upon} \(\op{timeout}(\heartbeat)\) \textbf{do} \` // Consensus.h:818\label{heartbeat} \\
  \> \(\ledger' \gets \op{getPreferred}(\var{validLedger}) \) \label{} \\
  \> \textbf{if} $\ledger' \neq \prevledger$ \textbf{then} \label{} \\
  \>\> $\prevledger \gets \ledger'$ \label{} \\
  \>\> \op{beginConsensus(\prevledger)} \label{} \\
  \> \textbf{if} \phase = \open \textbf{then}
  \` // wait until the closing ledger can be determined locally, Consensus.h:829 \label{} \\
  \>\> \textbf{if} $ (\op{clock.now()} - \openTime) \geq \frac{\prevRoundTime}{2}$ \textbf{then} \` // Consensus.cpp:75 \label{ifdelta} \\
  \>\>\> \(\phase \gets \establish\) \label{phaseestablish} \\
  \>\>\> \op{closeLedger()}\` // initialize consensus value in \result \label{closeledger} \\
  \> \textbf{else if} \phase = \establish \textbf{then} \` // agree on the contents of the ledger to close, Consensus.h:833 \label{loopstart} \\
  \>\> $\result.\roundTime \gets \op{clock.now()} - \result.\roundTime$ \label{} \\
  \>\> \(\converge \gets \frac{\op{\result.\roundTime}}{\max \{\prevRoundTime, 5 s\}}\) \label{convergecalc} \\
  \>\> \op{updateOurProposals()} \` // update consensus value in \result \label{callupdateourproposals} \\
  \>\> \textbf{if} \op{haveConsensus()} \textbf{then} \label{loopend} \\
  \>\>\> \(\phase \gets \accepted\) \label{l:phaseaccepted} \\
  \>\>\> \op{onAccept()}
  \` // note this immediately sets $\phase = \open$ inside \op{beginConsensus()} \label{onaccept} \\
  \> \textbf{else if} \phase = \accepted \textbf{then} \` // Consensus.h:821 \label{} \\
  \>\>  // do nothing \label{} \\
  \> $\op{startTimer}(\heartbeat, 1 s)$ \label{} \\
  \\
  // transition from \open to \establish phase \label{} \\
  \textbf{function} \op{closeLedger()}\` // Consensus.h:1309 \label{} \\
  \> \(\ledger \gets (\bot,\prevledger.seq + 1,\bot,\{\})\) \label{intiLedger} \\
  \> \(\result.\txns \gets \s\) \` // propose the current set of submitted transactions  \label{applyheldtxns} \\
  \> \(\result.\proposal \gets (\op{Hash(\prevledger)}, 0, \result.\txns, \iter, \op{clock.now())}\) \label{createproposal} \\
  \> $\result.\roundTime \gets \op{clock.now()}$ \label{roundtimenow} \\
  \> broadcast message \(\msg{proposal}{\result.\proposal}\) \label{broadcastproposal1} \\
  \> \(\result.\disputes \gets \emptyhashmap\) \` // a dispute exists for a transaction not proposed by all nodes in the UNL \label{} \\
  \> \textbf{for} \(j \in \unl \) \textbf{such that} $\currPeerPositions[j] \neq \bot$ \textbf{do} \label{createdisputes} \\
  \>\> $\op{createDisputes}(\currPeerPositions[j].\position)$ \` // compared to \result.\txns, Consensus.h:1334 \label{} \\
  \\
  \textbf{upon} receiving a message \(\msg{proposal}{\pos}\) \textbf{such that} $\pos = (nl, \cdot, \cdot, j, \cdot)$ \textbf{and} \label{receiveprop1} \\
  \>\> $j \in \unl$ \textbf{and} $nl = \op{Hash(\prevledger)}$ \textbf{do} \label{} \\
  \> \(\currPeerPositions[j] \gets \pos\) \` // Consensus.h:781 \label{recieveprop2} \\
  \\
  \textbf{function} \op{createDisputes(TxSet \set)} \` // Consensus.h:1623 \label{createDisputesFunction} \\
  \> \textbf{for} $\tx \in \result.\txns \; \triangle \; \set$ \textbf{do} \` // all transactions that differ between \result.\txns and \set \label{creatediff} \\
  \>\> \(\dt \gets \bigl(\tx, (\tx \in \result.\txns), 0, 0, \emptyhashmap\bigr)\) \` // $\dt$ is a disputed transaction \label{createdt} \\
  \>\> \textbf{for} $k \in \unl$ \textbf{such that} $\currPeerPositions[k] \neq \bot$ \textbf{do} \label{cdstart} \\
  \>\>\> \textbf{if} \(\tx \in \currPeerPositions[k].\position\) \textbf{then} \label{} \\
  \>\>\>\> \(\dt.\votes[k] \gets 1\) \` // record node's vote for the disputed transaction \label{} \\
  \>\>\>\> \(\dt.\yays \gets \dt.\yays + 1\) \label{} \\
  \>\>\> \textbf{else} \label{} \\
  \>\>\>\> \(\dt.\votes[k] \gets 0\) \` // record node's vote against the disputed transaction \label{} \\
  \>\>\>\> \(\dt.\nays \gets \dt.\nays + 1\) \label{cdend} \\
  \>\> \(\result.\disputes[\op{TxID}(\tx)] \gets \dt \) \label{storedt} \\
  \end{numbertabbing}
  }
  \caption{{\bf Ripple consensus protocol for node \iter (continued)}}
\end{algo*}

\begin{algo*}
  \vbox{
  \small
  \begin{numbertabbing}
  xxxx\=xxxx\=xxxx\=xxxx\=xxxx\=xxxx\=xxxx\kill
  // phase \establish \\
  \textbf{function} \op{updateOurProposals()}
  \` // Consensus.h:1361 \label{updateourproposals} \\
  \> \textbf{for} $j \in \unl$ \textbf{such that}  $(\op{clock.now()} - \currPeerPositions[j].\proposaltime) > 20 s$ \textbf{do} \` // Consensus.h:1378 \label{updateprop} \\
  \>\> $\currPeerPositions[\iterj] \gets \bot $ \` // remove stale proposals \label{} \\
  \> \(T \gets \result.\txns \) \` // current set of transactions, to update from disputed ones \label{emptysett} \\
  \> \textbf{for} $\dt \in \result.\disputes$ \textbf{do} \` // \dt is a disputed transaction \label{} \\
  \>\> \textbf{if} \op{updateVote(\dt)} \textbf{then}  \` // if vote on \dt changes, update the dispute set of the consensus round \label{} \\
  \>\>\> $\dt.\ourVote \gets \neg \dt.\ourVote$ \label{} \\
  \>\>\> \textbf{if} \dt.\ourVote \textbf{then} \` // should the transaction be included? DisputedTx.h:77 \label{} \\
  \>\>\>\> \(T \gets T \cup \{\dt.tx\}\) \` // \dt.\ourVote is initially set in \op{createDisputes(TxSet set)} \label{} \\
  \>\>\> \textbf{else} \label{} \\
  \>\>\>\> \(T \gets T \setminus \{\dt.tx\}\) \label{} \\
  \> \textbf{if} \(T \neq \result.\txns\)  \textbf{then} \` // if \txns changed, then update \result and tell the other nodes \label{} \\
  \>\> \(\result.\txns \gets T \) \label{} \\
  \> \> \(\result.\proposal \gets (\op{Hash(\prevledger)}, \result.\proposal.\seq + 1, \result.\txns, \iter)\) \label{} \\
  \>\> broadcast message \(\msg{proposal}{\result.\proposal}\) \label{broadcastnewproposal} \\
  \>\> \(\result.\disputes \gets \emptyhashmap\) \` // recompute \disputes after updating \result.\txns \label{} \\
  \>\> \textbf{for} \(j \in \unl \) \textbf{such that} $\currPeerPositions[j] \neq \bot$ \textbf{do}  \` // updateDisputes() at Consensus.h:1679 \label{updateDisputes} \\
  \>\>\> $\op{createDisputes}(\currPeerPositions[j].\position)$ \label{} \\
  \\
  \textbf{function} \op{updateVote(DisputedTx dt)} \` // DisputedTx.h:197 \label{updateVoteFunction} \\
  \> \textbf{if} \(\converge < 0.5\) \textbf{then} \` // set \threshold based on duration of the establish phase \label{thresholdstart} \\
  \>\>\(\threshold \gets 0.5\) \label{} \\
  \> \textbf{else if} \(\converge < 0.85\) \textbf{then} \label{} \\
  \>\>\(\threshold \gets 0.65\) \label{} \\
  \> \textbf{else if} \(\converge < 2\) \textbf{then} \label{} \\
  \>\> \(\threshold \gets 0.7\) \label{} \\
  \> \textbf{else} \label{} \\
  \>\>\(\threshold \gets 0.95\) \label{thresholdend} \\
  \> \(\newvote \gets \Bigl( \frac{\dt.\yays + \op{boolToInt}(\dt.\ourVote)}{\dt.\yays + \dt.\nays + 1} > \threshold \Bigr) \) \label{l:newvote}  \\
  \> \textbf{return} \(\bigl(\newvote \neq \dt.\ourVote\bigr)\) \` // the vote changes \label{l:comparevote} \\
  \\
  \textbf{function} \op{haveConsensus()} \` // Consensus.h:1545 \label{haveConsensusFunction} \\
  \> // count number of agreements and disagreements with our proposal \label{} \\
  \> \(\agree \gets |\{\iterj| \currPeerPositions[\iterj] =  \result.\proposal\}|\) \label{l:agree} \\
  \> \(\disagree \gets |\{\iterj| \currPeerPositions[\iterj] \neq \bot \land \currPeerPositions[\iterj] \neq  \result.\proposal\}|\) \label{l:disagree} \\
  \> \textbf{return} \( \bigl(\frac{\agree + 1 }{\agree + \disagree + 1} \ge 0.8 \bigr) \) \` // 0.8 is defined in ConsensusParams.h,  Consensus.cpp:104 \label{checkhaveconsensus}
  \end{numbertabbing}
  }
  \caption{{\bf Ripple consensus protocol for node \iter (continued)}}
\end{algo*}

\begin{algo*}
  \vbox{
  \small
  \begin{numbertabbing}
  xxxx\=xxxx\=xxxx\=xxxx\=xxxx\=xxxx\=xxxx\kill
  // phase \accepted\\
  \textbf{function} \op{onAccept()} \` // RCLConsensus.cpp:408 \label{onacceptfunction} \\
  \> \(\ledger \gets (\prevledger, \result.\txns) \) \` // \ledger is the last closed ledger, RCLConsensus.cpp:708 \label{l:buildlcl} \\
  \> $\validations[\iter] \gets \ledger$\label{} \\
  \> \(\val \gets \op{sign\textsubscript{\iter}(\ledger)}\) \` // validate the ledger, RCLConsensus.cpp:743 \label{l:sign} \\
  \> broadcast message \(\msg{validation}{\iter, \val, \ledger}\) \label{broadcastvalidation} \\
  \>\(\prevledger \gets \ledger\) \` // store the last closed ledger \label{l:prevledg} \\
  \>\(\prevRoundTime \gets \result.\roundTime\) \label{} \\
  \> \op{beginConsensus()} \` // advance to the next round of consensus, NetworkOPs.cpp:1584 \label{l:startconsensusagain} \\
  \\
  \textbf{upon} receiving a message \(\msg{validation}{j, \val, \ledger'}\) \textbf{such that} \` // LedgerMaster.cpp:858 \label{receiveval} \\
  \>\>  \(\ledger'.seq = \ledger.seq\) \textbf{and} \op{verify\textsubscript{\iterj}}\((\ledger',\val)\) \textbf{do} \label{} \\
  \> add $\ledger'$ to \tree \label{addtotree} \\
  \> $\validations[\iterj] \gets \ledger'$  \` // store received validation \label{} \\
  \> \(\valCount \gets |\{k \in \unl|\validations[k] = \ledger\}| \) \` // count the number of validations \label{l:valcount} \\
  \> \textbf{if} \(\valCount \ge 0.8 \cdot |\unl| \) \textbf{and} \(\ledger.seq > \fullyledger.seq\) \textbf{then} \label{l:checkquorum} \\
  \>\>\(\fullyledger \gets \ledger \) \` // ledger becomes fully validated \label{l:fullyledger} \\
  \>\>\(\s \gets \s \setminus \{\ledger.\txns\} \)  \label{} \\
  \>\> \textbf{for} \(\tx \in \ledger.\txns \) \textbf{do} \` // in some deterministic order \label{} \\
  \>\>\> \op{execute(\tx)} \label{executetxns} \\
  \\
  
  \textbf{function} \op{getPreferred(Ledger L)} \` // LedgerTrie.h:677 \label{getPreferredFunction} \\
  \> \textbf{if} \ledger is a leaf node in \tree \textbf{then} \label{} \\
  \>\> \textbf{return} \ledger \label{} \\
  \> \textbf{else} \label{} \\
  \>\> $ M \gets \argmax\{\support(N)\ |\ N\ \text{is a child of $L$ in the \tree}\} $ \label{} \\
  \>\> \textbf{if} $\uncommitted(M) \geq \support(M)$ \textbf{then}  \label{cond1} \\
  \>\>\> \textbf{return} \ledger \label{preferredL1} \\
  \>\> \textbf{else if} $ \max\{\support(N)\ |\ N\in siblings(M)\} + \uncommitted(M) < \support(M)$ \textbf{then} \label{cond2} \\
  \>\>\> \textbf{return} \op{getPreferred(M)}\label{getpreferredL2} \\
  \>\> \textbf{else} \label{} \\
  \>\>\> \textbf{return} \ledger \label{getPreferredEnd} 
  \end{numbertabbing}
  }
  \caption{{\bf Ripple consensus protocol for node \iter (continued)}}
  \label{alg:alg4}
\end{algo*}

\paragraph{Remarks on the pseudocode.} Next to every function name, a
comment points to a specific file and line in the source code which
contains its implementation. The Ripple source contains a large number of
files and most of the consensus protocol implementation is actually spread
over multiple header (.h) files, which complicates the analysis of the
code. The references in this work are based on version 1.4.0\footnote{The latest release (16.\ November 2020) is version 1.6.0. Compared to version 1.4.0, the current release has no significant changes concerning the consensus protocol.} of
\texttt{rippled}~\cite{sourcecode}.

\section{Violation of safety}
\label{sec:safety}

In this section we address the safety of the Ripple consensus protocol. We
first describe a simple scenario, in which consensus is violated in an
execution with seven nodes, of which one is Byzantine. Secondly, we show
how this problem can be generalized to executions with more nodes.

\subsection{Violating agreement with seven nodes}
\label{sec:safetyattack}

To show that the Ripple consensus protocol violates safety and may let two
correct nodes execute different transactions, we use the following scenario
with seven nodes. Figure~\ref{fig:safetyexample} gives a graphical
representation of our scenario and we will refer to it later in the text.

\begin{figure}[h!]
  \centering
  \includegraphics[width=0.6\textwidth]{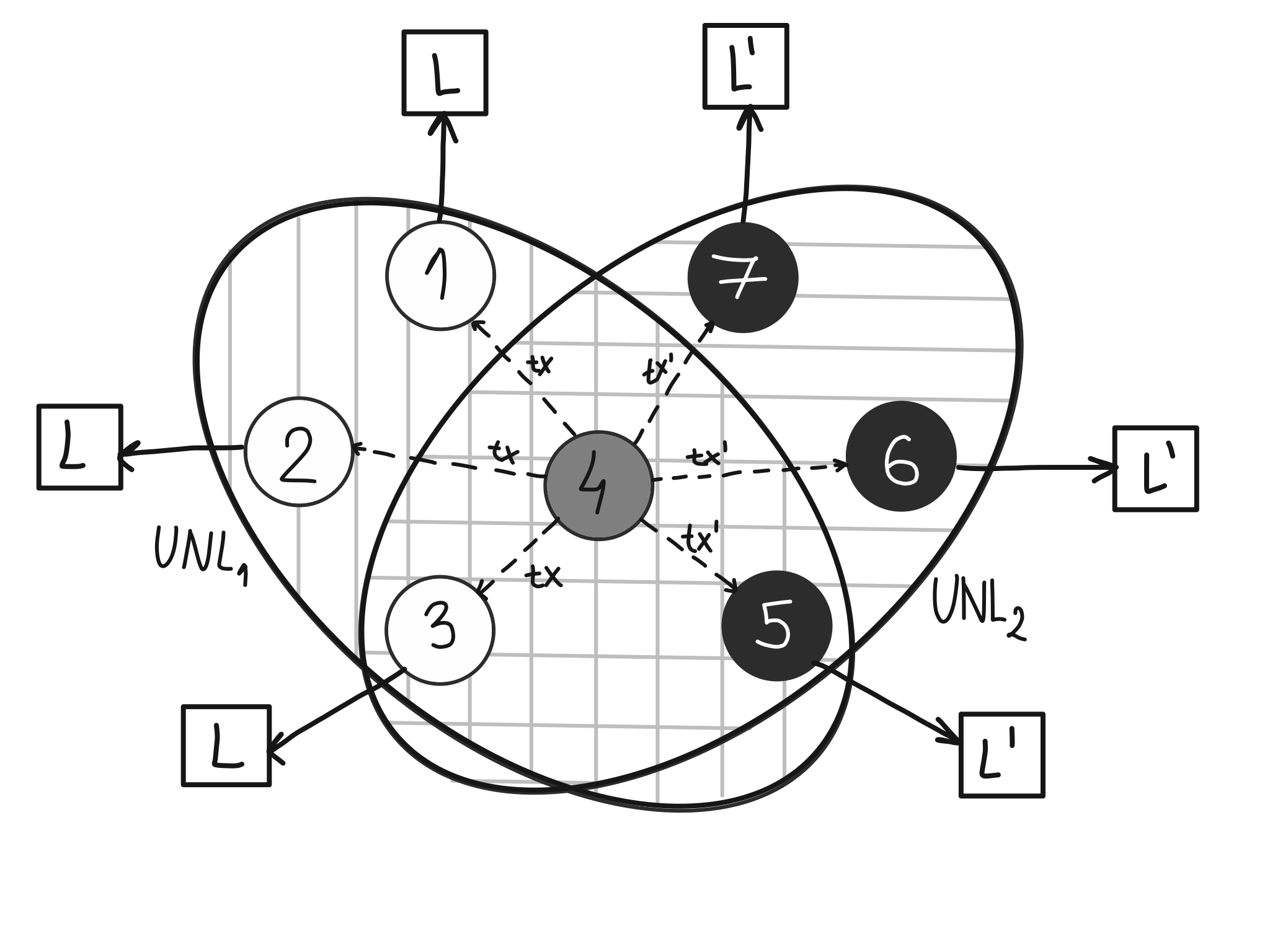}
  \caption{Example setup for showing a safety violation in the Ripple
    consensus protocol.  The setup consists of seven nodes, one of them
    Byzantine, and two UNLs.  Nodes 1, 2, and 3 (white) adopt $\unl_1$,
    vertically hatched, and nodes 5, 6, and 7 adopt $\unl_2$, horizontally
    hatched. Node~4 (gray) is Byzantine.}
  \label{fig:safetyexample}
\end{figure}

Nodes are named by numbers.  We let $\unl_1 = \{1, 2, 3, 4, 5\}$ and
$\unl_2 = \{3, 4, 5, 6, 7\}$, as illustrated by the two hatched areas in
the figure.  Nodes 1, 2, and 3 (white) trust $\unl_1$, nodes 5, 6, and 7
(black) trust $\unl_2$, and they are all correct; node~4 (gray) is
Byzantine.  With this setup, we achieve 60\% overlap between the UNLs of
any two nodes.

The key idea is that the Byzantine node (4) changes its behavior depending
on the group of nodes to which it communicates.  It will cause nodes 1, 2,
and 3 (white) to propose some transaction~\tx and nodes 5, 6, and 7 (black)
to propose a transaction~$\tx'$ for the next ledger. No other transaction
exists.  The Byzantine node (4) follows the protocol as if it had proposed
\tx when interacting with the white nodes and behaves as if it had proposed
$\tx'$ when interacting with the black nodes.

Assuming that all nodes start the consensus roughly at the same time and they do not switch the preferred ledger, the
protocol does the following:
\begin{itemize}
\item The Byzantine node 4 submits $\tx$ and $\tx'$ using gossip and causes
  \msg{submit}{\tx} to be received by nodes 1, 2, and 3 and
  \msg{submit}{\tx'} to be received by nodes 5, 6, and 7 from the gossip
  layer.  During the repeated \heartbeat timer executions in the \open
  phase, all correct nodes have the same value of \prevledger and send no
  further messages.
\item Suppose at a common execution of the \heartbeat timer execution
  (L\ref{heartbeat}) all correct nodes proceed to the \establish phase and
  call \op{closeLedger}.  They broadcast the message \msg{proposal}{\s},
  with $\s$ containing $\tx$ or $\tx'$, respectively
  (L\ref{broadcastproposal1}).  Node~4 sends a \str{proposal} message
  containing $\tx$ to nodes~1, 2, and 3 and one containing $\tx'$ to
  nodes~5-7.  Furthermore, every correct node executes
  $\op{createDisputes}$ with the transaction set~\txns received in
  each \str{proposal} message, which creates \result.\disputes
  (L\ref{createDisputesFunction}).  For nodes 1, 2, and 3, transaction
  $\tx'$ is disputed and for nodes 5, 6, and 7, transaction \tx is
  disputed.
\item During \establish phase, all nodes update their vote for each
  disputed transaction (L\ref{updateVoteFunction}). Nodes~1, 2, and 3
  consider $\tx'$ but do not change their \emph{no} vote on~$\tx'$ because
  only 20\% of nodes in their UNL (namely, node~5) vote \emph{yes} on
  $\tx'$; this is less than required \threshold of 50\% or more
  (L\ref{l:newvote}).  The same holds for nodes~5, 6, and 7 with respect to
  transaction $\tx$.  Hence, \result.\txns remains unchanged and no correct
  node sends another \str{proposal} message.
\item Eventually, function \op{haveConsensus} returns \true for each
  correct node because the required $4/5 = 80\%$ of its \unl has issued the
  same proposal as the node itself (L\ref{haveConsensusFunction}).  Every
  correct node moves to the \accepted phase.
\item During \op{onAccept}, nodes~1, 2, and 3 send a \str{validation}
  message with ledger $L = (\prevledger, \{\tx\})$, whereas nodes~5, 6, and
  7 send a \str{validation} message containing
  $L' = (\prevledger, \{\tx'\})$ (L\ref{broadcastvalidation}).  Node~4
  sends a \str{validation} message containing $\tx$ to nodes~1, 2, and 3
  and a different one, containing $\tx'$, to nodes~5, 6, and 7.
\item Every correct node subsequently receives five validation messages,
  from all nodes in its UNL, and finds that 80\% among them contain the
  same ledger (L\ref{receiveval}). Observe that no node changes its preferred ledger after calling \op{getPreferred}. This implies that nodes 1, 2, and 3
  fully validate $L$ and execute \tx, whereas nodes 5, 6, and 7 fully
  validate $L'$ and execute $\tx'$.  Hence, the agreement condition of
  consensus is violated.
\end{itemize}

\subsection{Generalization}

We now generalize the previous scenario and show a violation of agreement
with an arbitrarily large number of nodes.  As illustrated in
Figure~\ref{fig:gensafety}, the system consists of $M = 2n+f$ nodes, such
that nodes~$1, \dots, n$ (white) each submit transaction~$\tx$,
nodes~$n+1, \dots, n+f$ (gray) are Byzantine, and
nodes~$n+f+1, \dots, 2n + f$ (black) each submit transaction~$\tx'$.
Assume all correct nodes have one of two different UNLs, namely
$\unl_{\tx} = \{1, \dots, n + f + \tilde{n}\}$ or
$\unl_{\tx'} = \{n - \tilde{n} + 1, \dots, 2n + f\}$, each of size
$n + f + \tilde{n}$.  As the names suggest, nodes $1, \dots, n$, which
submit~$\tx$, use $\unl_{\tx}$ and nodes $n+f+1, \dots, 2n+f$, which
submit~$\tx'$, use $\unl_{\tx'}$.

The execution proceeds analogously to the one in the previous section, with
nodes $1, \dots, n$ behaving like nodes~1, 2, and 3, the $f$ Byzantine
nodes here behaving like node~4, and nodes $n+f+1, \dots, 2n+f$ behaving
like nodes~5, 6, and 7.  The strategy of the Byzantines nodes is to follow
the protocol, as if they had submitted transaction $\tx$ when they interact
with correct nodes $1, \dots, n$, and to behave as if they had submitted
transaction $\tx'$ when they interact with correct
nodes~$n+f+1, \dots, 2n+f$.

\begin{theorem}\label{thm:safety}
  A system of $2n+f$ nodes, of which $f$ are Byzantine, running the Ripple
  consensus protocol according to the scenario defined above may violate
  safety if
  \begin{equation}
    \frac{n+f}{n+\tilde{n}+f} \geq 0.8 \label{eq:general1}.
  \end{equation}
\end{theorem}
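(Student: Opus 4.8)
The plan is to instantiate the execution sketched just before the theorem and to check, phase by phase, that under \eqref{eq:general1} every correct node reaches the \accepted phase, fully validates a ledger, and executes its transaction set, while the two groups of correct nodes fully validate \emph{different} ledgers; this violates the Agreement property of atomic broadcast. First I would fix an adversarial schedule: all correct nodes fire the \heartbeat handler (L\ref{heartbeat}) at the same ticks; every \str{submit}, \str{proposal} and \str{validation} message exchanged between correct nodes is delivered within one tick; and each Byzantine node $n+1,\dots,n+f$ sends its $\tx$-flavoured messages only to $\{1,\dots,n\}$ and its $\tx'$-flavoured messages only to $\{n+f+1,\dots,2n+f\}$, behaving in each group exactly like a correct node that had submitted the corresponding transaction. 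Under this schedule the \open phase followed by \op{closeLedger} (L\ref{closeledger}) leaves each white node with $\result.\txns = \{\tx\}$ and each black node with $\result.\txns = \{\tx'\}$, broadcasts the matching \str{proposal} messages, and, via \op{createDisputes}, marks $\tx'$ as disputed for the white nodes and $\tx$ for the black nodes. By the left--right symmetry of the construction it suffices to track one white node~$i$, whose UNL $\unl_{\tx} = \{1,\dots,n+f+\tilde{n}\}$ (of size $n+f+\tilde{n}$) contains all $n$ white nodes, all $f$ Byzantine nodes, and exactly $\tilde{n}$ black nodes.

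Everything then reduces to three threshold checks, each equivalent to or implied by \eqref{eq:general1}, which I will use in the form $n+f \geq 4\tilde{n}$. (1)~\emph{Vote stability.} At node~$i$ the disputed transaction $\tx'$ has $\dt.\yays = \tilde{n}$ (only the $\tilde{n}$ black members of $\unl_{\tx}$ propose $\tx'$) and $\dt.\nays = n-1+f$, so the fraction compared with \threshold in L\ref{l:newvote} equals $\tilde{n}/(n+f+\tilde{n}) \leq 1/5$, which lies below every value \threshold takes in \op{updateVote} (L\ref{updateVoteFunction}) irrespective of $\converge$; hence no correct node ever flips a vote and no further \str{proposal} messages are produced. (2)~\emph{Consensus.} In \op{haveConsensus}, node~$i$ counts $\agree = n-1+f$ proposals equal to its own and $\disagree = \tilde{n}$ different ones, so the ratio in L\ref{checkhaveconsensus} is $(n+f)/(n+f+\tilde{n}) \geq 0.8$, which is exactly \eqref{eq:general1}; node~$i$ therefore moves to \accepted, and \op{onAccept} (L\ref{onacceptfunction}) makes it broadcast a \str{validation} for the ledger $L = (\prevledger, \{\tx\})$, while every black node broadcasts a \str{validation} for $L' = (\prevledger, \{\tx'\})$. (3)~\emph{Full validation.} Node~$i$ tallies $\valCount = n+f$ validations for $L$ out of $|\unl_{\tx}| = n+f+\tilde{n}$, and $n+f \geq 0.8\,(n+f+\tilde{n})$ is again \eqref{eq:general1}; since moreover $L.\seq = \prevledger.\seq + 1 > \fullyledger.\seq$, the test in L\ref{l:checkquorum} succeeds, $L$ becomes fully validated, and $\tx$ is executed at L\ref{executetxns}. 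By symmetry each black node fully validates $L'$ and executes $\tx'$; as $\tx \neq \tx'$ and $\tx$ is executed by nodes $1,\dots,n$ but by none of $n+f+1,\dots,2n+f$, Agreement fails and safety is violated.

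The one step that needs real care---and the main obstacle---is confirming that the \heartbeat handler never disturbs this schedule, since its first action is to compute $\op{getPreferred}(\fullyledger)$ (L\ref{getPreferredFunction}) and, if the result differed from $\prevledger$, the node would restart the round on a different ledger. To rule this out I would observe that node~$i$ only ever inserts $\genesisledger$ and $L$ into \tree: every validation it receives carries $L$ (from the white and the Byzantine nodes), and it never sees $L'$. Hence \op{getPreferred} is always invoked on a path made of $\genesisledger$ and possibly its single child $L$. As long as the tree holds only $\genesisledger$, that leaf is returned; once $L$ has entered the tree, the argmax child $M$ equals $L$, with $\op{support}(M) = n+f \geq 1$ and $\op{uncommitted}(M) = 0$ (all validations carry sequence number $L.\seq$), and, since $M$ has no siblings, this is precisely the degenerate case noted in the remark after the description of \op{getPreferred}, where L\ref{cond1} fails exactly when L\ref{cond2} holds, so the recursion descends to the leaf $L$ and returns it. In every case the output equals $\prevledger$ (which is $\genesisledger$ until \op{onAccept} sets it to $L$), so no correct node restarts its round. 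The remaining points are routine: the \establish phase closes within one tick, so peer proposals are never discarded by the $20$-second staleness rule in L\ref{updateprop}, and $\converge$ stays small, which is immaterial anyway by step~(1); both follow directly from the schedule, completing the argument.
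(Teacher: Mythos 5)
Your proposal is correct and takes essentially the same route as the paper, whose proof merely instantiates the generalized scenario and notes that condition~(\ref{eq:general1}) is the 80\%-quorum test (L\ref{l:checkquorum}); your explicit counts for \op{updateVote}, \op{haveConsensus} and full validation just spell out what the paper leaves to the seven-node analogy. One small internal quibble: your claim that a white node never inserts $L'$ into \tree contradicts your own delivery schedule (the $\tilde{n}$ black UNL members do broadcast validations for $L'$), but this is harmless, since even with $L'$ present the test at L\ref{cond2} still selects $L$ (because $\tilde{n} < n+f$), and the paper's scenario in any case assumes nodes do not switch the preferred ledger.
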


\begin{proof}
  To prove that safety can be violated, it is enough to show that the
  strategy of the Byzantine nodes is successful.  This follows from the
  same argument as in the previous scenario with seven nodes, according to
  the pseudocode in Section~\ref{sec:analysis}.  The
  condition~(\ref{eq:general1}) is corresponds to the test for fully
  validating a ledger (L\ref{l:checkquorum}).
\end{proof}

\begin{figure}
  \centering
  \includegraphics[width=0.7\textwidth]{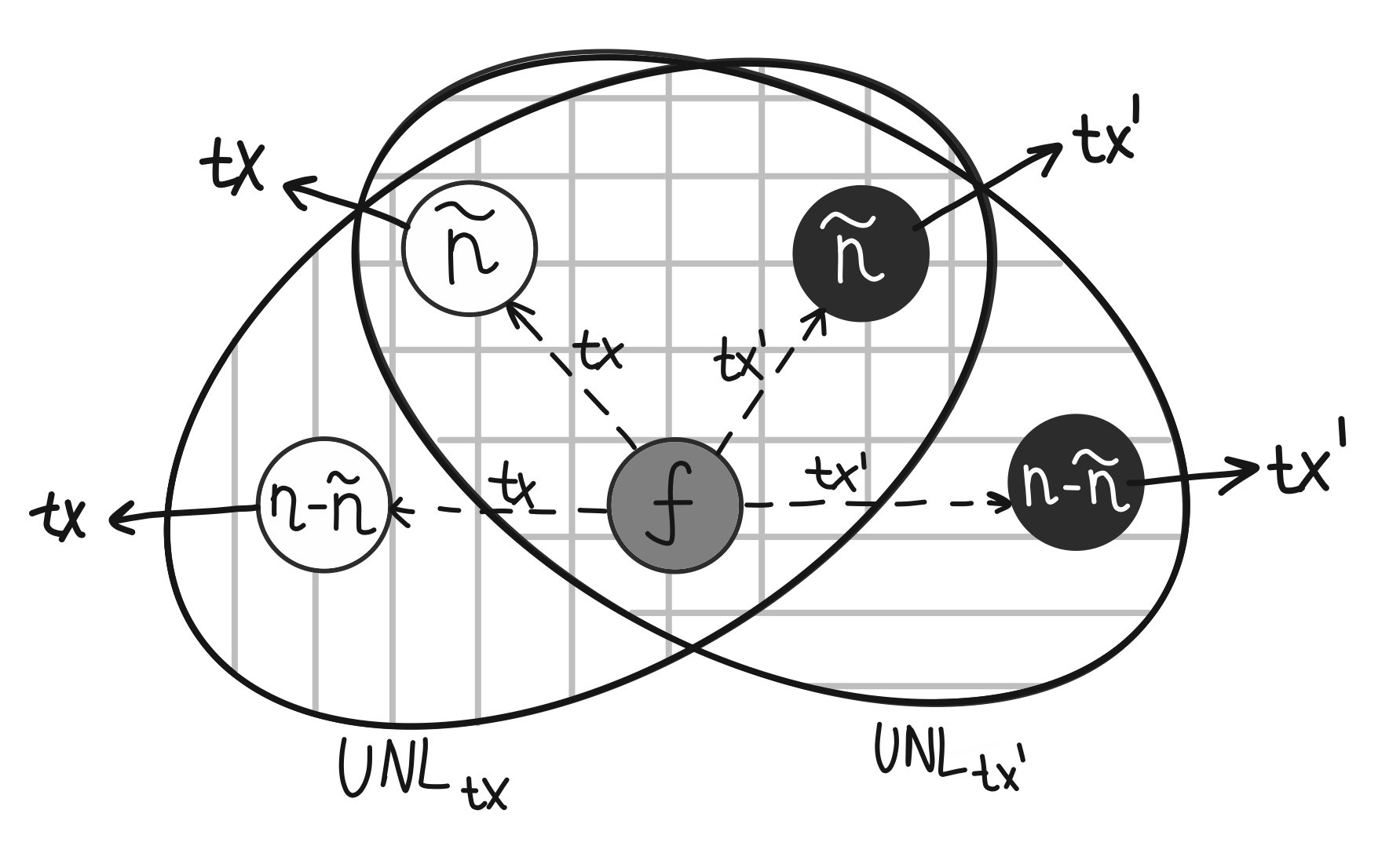}
  \caption{The generalized attack scenario with $2n+f$ nodes.  The $n$
    white nodes submit~$\tx$ and have $\unl_{\tx}$, while the $n$ black
    nodes submit $\tx'$ instead and have $\unl_{\tx'}$.  The $f$ Byzantine
    nodes (gray) behave differently, depending on whether they interact
    with white and black nodes, respectively.}
  \label{fig:gensafety}
\end{figure}

The bound (\ref{eq:general1}) of the theorem corresponds directly to the
condition in the source code.  We can reformulate this, using
$\ol=\frac{2\tilde{n}+f}{n+\tilde{n}+f}$ to denote the relative overlap of
the UNLs (i.e., the fraction of nodes that are common between the two
UNLs).

\begin{corollary}\label{cor:safety}
  The Ripple consensus protocol may violate safety in a system of $2n+f$
  nodes if
  \begin{equation}\label{eq:cor:safety}
    f \ \geq \ n \: \frac{5\ol-2}{6-5\ol}
  \end{equation}
  or equivalently, recalling that the total number of correct nodes is
  $2n$,
  \begin{equation}\label{eq:cor:safety2}
    f \ \geq \ 2n \: \frac{5 \ol-2}{12-10\ol}.
  \end{equation}
\end{corollary}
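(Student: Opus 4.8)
The plan is to derive Corollary~\ref{cor:safety} purely by algebraic manipulation of the bound~\eqref{eq:general1} in Theorem~\ref{thm:safety}, substituting the definition of the overlap ratio $\ol$. First I would recall that $\ol = \frac{2\tilde{n}+f}{n+\tilde{n}+f}$, so $\tilde{n}$ can be expressed in terms of $\ol$, $n$, and $f$: solving gives $2\tilde{n}+f = \ol(n+\tilde{n}+f)$, hence $\tilde{n}(2-\ol) = \ol(n+f) - f$, i.e.\ $\tilde{n} = \frac{\ol(n+f)-f}{2-\ol}$. Then I would substitute this into~\eqref{eq:general1}, whose left-hand side has denominator $n+\tilde{n}+f$; a short computation shows $n+\tilde{n}+f = \frac{2(n+f)-\ol(n+f)+\ol(n+f)-f}{2-\ol}$ — more carefully, $n+\tilde{n}+f = (n+f) + \tilde{n} = \frac{(n+f)(2-\ol)+\ol(n+f)-f}{2-\ol} = \frac{2(n+f)-f}{2-\ol}$.

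Next I would plug this back into~\eqref{eq:general1}. The inequality $\frac{n+f}{n+\tilde{n}+f}\geq 0.8 = \frac{4}{5}$ becomes $5(n+f) \geq 4(n+\tilde{n}+f) = 4\cdot\frac{2(n+f)-f}{2-\ol}$, so $5(n+f)(2-\ol) \geq 8(n+f) - 4f$, which rearranges to $10(n+f) - 5\ol(n+f) - 8(n+f) + 4f \geq 0$, i.e.\ $2(n+f) + 4f - 5\ol(n+f) \geq 0$, i.e.\ $(2-5\ol)(n+f) + 4f \geq 0$. Collecting the $f$ terms: $f(2-5\ol+4) \geq -(2-5\ol)n$, that is $f(6-5\ol) \geq (5\ol-2)n$. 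Since the relevant regime has $6-5\ol > 0$ (as $\ol \leq 1$), dividing yields $f \geq n\,\frac{5\ol-2}{6-5\ol}$, which is exactly~\eqref{eq:cor:safety}. The second form~\eqref{eq:cor:safety2} follows immediately by writing $n = \tfrac{1}{2}(2n)$ and multiplying numerator and denominator by $2$, giving $f \geq 2n\,\frac{5\ol-2}{12-10\ol}$.

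I would also note the sign/direction caveats needed for the division step: when $\ol \leq \tfrac{2}{5}$ the numerator $5\ol-2$ is non-positive, so the bound is vacuous (any $f\geq 0$ works), consistent with the earlier literature's observation that $40\%$ overlap is already insufficient; and the denominator $6-5\ol$ is positive throughout $\ol\in[0,1]$, so no case split on its sign is required. I would finish by remarking that the equivalence of~\eqref{eq:general1}, \eqref{eq:cor:safety}, and~\eqref{eq:cor:safety2} is exact (all steps are reversible algebra), so the corollary is just a change of variables making the dependence on the UNL overlap explicit.

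This proof has essentially no genuine obstacle — it is routine algebra — so the only thing to be careful about is bookkeeping: correctly eliminating $\tilde{n}$, keeping track of which quantities are assumed positive, and confirming that the inequality direction is preserved when clearing the denominator $2-\ol > 0$ and dividing by $6-5\ol > 0$. I would double-check the final expressions by back-substituting the seven-node example, where $n = 3$, $f = 1$, $\tilde{n} = 1$, giving $\ol = \frac{3}{5}$ and $\frac{5\ol-2}{6-5\ol} = \frac{1}{3}$, so the bound reads $f \geq 3\cdot\tfrac13 = 1$, which holds with equality — matching the concrete attack.
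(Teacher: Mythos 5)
Your proposal is correct and takes the same route as the paper's proof, which simply states that \eqref{eq:cor:safety} follows from \eqref{eq:general1} by substituting $\tilde{n}$ in terms of the overlap \ol; you have merely carried out that substitution explicitly (and correctly, including the sign considerations for $2-\ol$ and $6-5\ol$ and the sanity check against the seven-node instance). No further changes are needed.
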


\begin{proof}
  Equation~(\ref{eq:cor:safety}) follows directly from (\ref{eq:general1})
  by substituting $\tilde{n}$ in terms of the overlap~\ol.  Furthermore,
  (\ref{eq:general1}) follows from (\ref{eq:cor:safety}) by replacing the
  \unl size through the total number of nodes.
\end{proof}

Corollary~\ref{cor:safety} illustrates the number of Byzantine nodes
required to break the safety of the protocol using the presented strategy.
The number of Byzantine nodes required to show the violation is
proportional to $n$, the number of correct nodes.

\section{Violation of liveness}
\label{sec:liveness}

In this section, we show how the liveness of the Ripple consensus protocol
may be violated, even when all nodes have the same UNL and only one node is
Byzantine.  One can bring the protocol to a state, in which it cannot
produce a correct ledger and where it stops making progress.

Consider a system with $2n$ correct nodes and one single Byzantine node.
All nodes are assumed to trust each other, i.e., there is one common UNL
containing all $2n+1$ nodes.  Observe that in this system, the fraction of
Byzantines nodes can be made arbitrary small by increasing~$n$.

\begin{figure}
  \centering
  \includegraphics[width=0.7\textwidth]{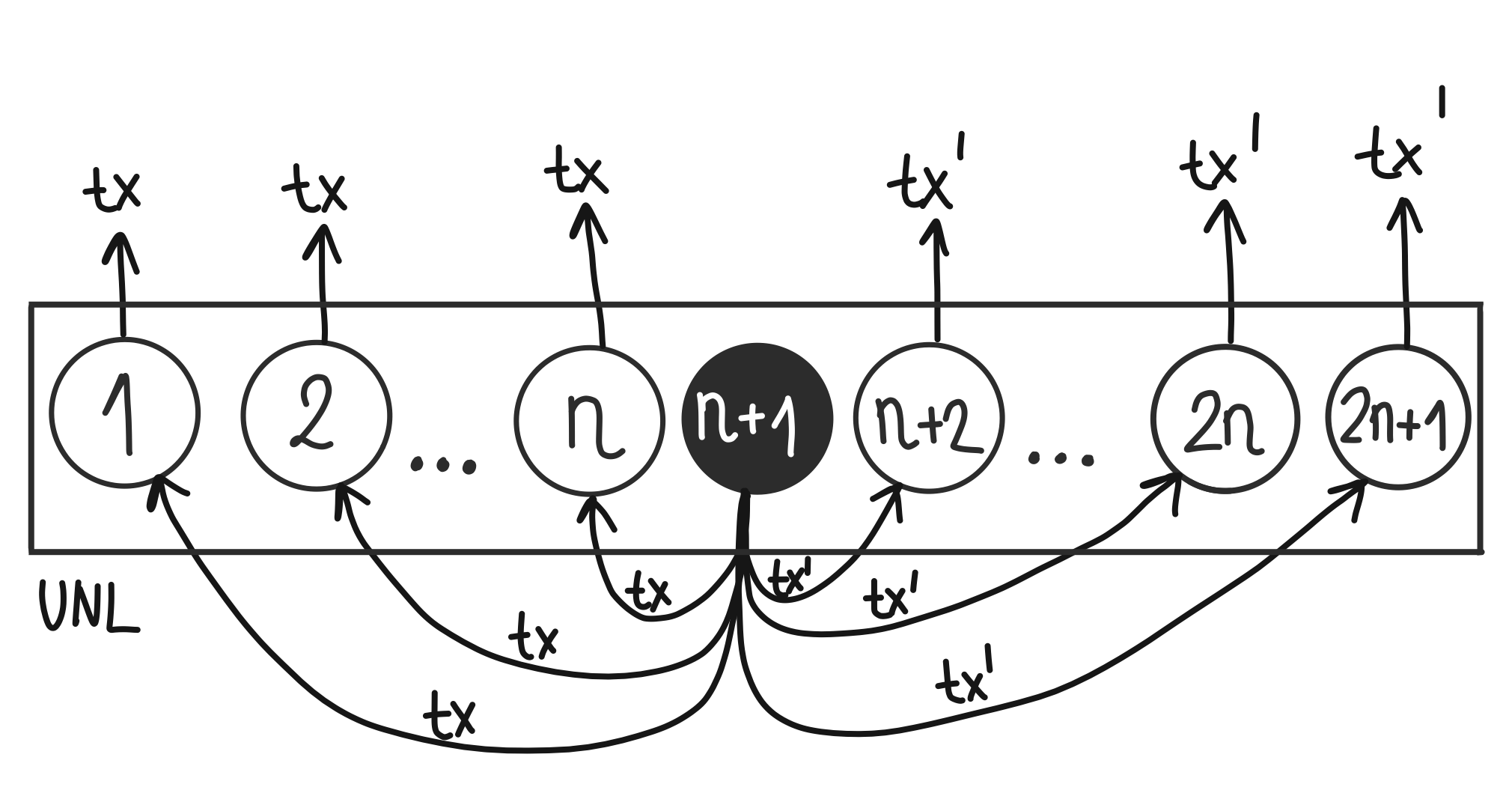}
  \caption{Setup in which liveness is violated in the Ripple network. The
    network consists of $2n+1$ nodes with one single UNL and 1 Byzantine
    (black). The $n$ first nodes propose transaction $\tx$ while the last
    $n$ propose transaction $\tx'$. The Byzantine proposes transaction
    $\tx$ to the $n$ first nodes and transaction $\tx'$ to the last $n$.}
  \label{fig:livenessexample}
\end{figure}

As illustrated in Figure~\ref{fig:livenessexample}, node~$n+1$, which is
Byzantine, exhibits a split-brain behavior and follows the protocol for an
input transaction~\tx when interacting with nodes $1, \dots, n$, and
operates with a different input transaction~$\tx'$ when interacting with
nodes $n+2, \dots, 2n+1$.  This implies that the first half of the correct
nodes, denoted $1,\dots,n$, will propose a transaction~$\tx$ and the other
half, nodes $n+2, \dots, 2n+1$, will propose transaction $\tx'$.  Similar
to the execution shown in Section~\ref{sec:safetyattack}, the nodes start
the consensus protocol roughly at the same time and they do not switch the preferred ledger, they proceed like this:

\begin{itemize}
\item Byzantine node~$n+1$ sends two messages, \msg{submit}{\tx} and
  \msg{submit}{\tx'}, using the gossip layer and causes \tx to be received
  by the first $n$ correct nodes and $\tx'$ to be received by the last $n$
  correct nodes.
\item After some time has passed, the correct nodes start to close the
  ledger and move to the \establish phase.  Every correct node sends a
  \str{proposal} message, containing only the submitted transaction of
  which it knows (L\ref{broadcastproposal1}), namely \tx for the first $n$
  correct nodes and $\tx'$ for the last $n$ correct nodes.
\item During \establish phase, the correct nodes receive the \str{proposal}
  messages from all nodes (including the Byzantine node) and store them in
  \currPeerPositions (L\ref{receiveprop1}).  Since they all use the same
  UNL, all obtain the same \str{proposal} messages from the correct nodes.
\item  Each node creates disputes (L\ref{createDisputesFunction})
  and updates them while more \str{proposal} messages arrive.
  Since the proposed transaction sets differ,  each node creates
  a dispute for $\tx$ and for~$\tx'$.
\item While the \str{proposal} messages are being processed, votes are
  counted in \op{updateVotes} (L\ref{updateVoteFunction}), using the \yays
  and \nays of each disputed transaction.  For a correct node in
  $\{1, \dots, n\}$, notice that the first $n$ nodes and the Byzantine node
  vote \emph{no} for $\tx'$ and the last $n$ nodes vote \emph{yes}.  Thus,
  the fraction of nodes voting \emph{yes} for $\tx'$ is less than required
  threshold (50\%), and so the first $n$ nodes continue to vote \emph{no}
  for $\tx'$.  Similarly, nodes $n+2$ to $2n+1$ never update their vote on
  \tx and always vote \emph{no} for \tx.
\item The \op{haveConsensus} function called periodically during the
  \establish phase checks if at least 80\% of the nodes in the UNL agree on
  the the proposal of the node itself (L\ref{haveConsensusFunction}).  From
  the perspective of each one of the first $n$ correct nodes, $n$ other
  nodes agree and $n$ nodes disagree with its proposal, which contains \tx.
  That is not enough support for achieving consensus and the function
  will return \false.  The same holds from the perspective of the last $n$
  correct nodes, which also continuously return \false.
\item Finally, the correct nodes will continue trying to update votes and
  get enough support, but without being able to generate a correct ledger.
  No correct node proceeds to validating the ledger.  In other words,
  liveness of the protocol is not guaranteed.
\end{itemize}

\if0 
\subsection{Generalization}
The attack explained above achieves the violation of liveness with a a really weak adversary. As a consequence of this weak adversary the attack only succeeds when the correct nodes are divided, according to their input transaction, in two groups of identical size.\\ \\
From the reasoning above, it is direct to deduce that if one group is strictly bigger than the other the attack fails. In particular, nodes from the small group will update their proposal, so Ripple would achieve consensus. In particular, the main parameter for the generalization is the difference in the group size, $\Delta$.

\begin{theorem}\label{theo:generalliveness}
Given $M$ nodes running the Ripple protocol with one single \unl. Let $G_{\tx}$ and $G_{\tx'}$ be the groups of correct nodes with input transaction $\tx$ and $\tx'$ respectively. Let $\Delta:=\big||G_{\tx}|-|G_{\tx'}|\big|$ be the difference in size of these groups. Liveness can be violated with $f\geq \ceil{\frac{\Delta}{2}}+1$ Byzantine nodes.
\begin{proof}
We define the strategy of the Byzantines nodes to recreate the scenario explained in Section~\ref{sec:liveness}. The attack depends on weather $\Delta$ is even or odd. Without loss of generality, assume that more nodes have input transaction $\tx$ than $\tx'$.
  \begin{itemize}
    \item \textbf{$\Delta$ even:} the attacker uses $\frac{\Delta}{2}$ Byzantine nodes to run the protocol with input transaction $\tx'$ and the last Byzantine node performs the same strategy explained in Section~\ref{sec:liveness}.
    \item \textbf{$\Delta$ odd:} the attacker uses $\frac{\Delta-1}{2}$ Byzantine nodes to run the protocol with input transaction $\tx'$ and the last two Byzantine nodes perform the same strategy explained in Section~\ref{sec:liveness}.
  \end{itemize}
  The same steps as in Section~\ref{sec:liveness} show that this attack is succeeds.
\end{proof}

\end{theorem}
\fi 

\section{Conclusion}
\label{sec:conclusion}

Ripple is one of the oldest public blockchain platforms. For a long time, its native XRP token has been the third-most valuable in terms of its total market capitalization. The Ripple network is implemented as a peer-to-peer network of validator nodes, which should reach consensus even in the presence of faulty or malicious nodes. Its consensus protocol is generally considered to be a Byzantine fault-tolerant protocol, but without global knowledge of all participating nodes and where a node only communicates with other nodes it knows from its UNL.

Previous work regarding the Ripple consensus protocol has already brought up some concerns about its liveness and safety. In order to better analyze the protocol, this work has presented an independent, abstract description derived directly from the implementation. Furthermore, this work has identified relatively simple cases, in which the protocol may violate safety and/or liveness and which have devastating effects on the health of the network. Our analysis illustrates the need for very close synchronization, tight interconnection, and fault-free operations among the participating validators in the Ripple network.

\section*{Acknowledgments}
The authors thank Luca Zanolini and Orestis Alpos for interesting discussions.

This work has been funded by the Swiss National Science Foundation (SNSF)
under grant agreement Nr\@.~200021\_188443 (Advanced Consensus Protocols).

\bibliography{references}
\bibliographystyle{ieeesort}

\end{document}